\documentclass[sigconf, authorversion]{acmart}
\usepackage{amsmath}
\usepackage{hyperref}
\usepackage{siunitx}
\usepackage{graphicx}
\usepackage{subcaption} %

\newcommand{\Rt}{\ensuremath{\tilde{R}}}
\newcommand{\R}{\ensuremath{R}}

\newcommand{\infint}{\ensuremath{\int_{-\infty}^{\infty}}}

\newif\ifmarkrevision

\markrevisionfalse
\newcommand{\blueon}{%
  \ifmarkrevision
    \color{blue}%
  \fi
}

\newcommand{\revlabel}[1]{%
  \ifmarkrevision
    \textcolor{blue}{\textbf{[Rev\##1]: }}%
  \fi
}

\newcommand{\blueoff}{\color{black}}

\AtBeginDocument{%
  }

\setcopyright{acmlicensed}

\copyrightyear{2026}
\acmYear{2026}
\setcopyright{cc}
\setcctype{by}
\acmConference[WiSec '26]{Proceedings of the 19th ACM Conference on Security and Privacy in Wireless and Mobile Networks}{June 30-July 03, 2026}{Saarbrücken, Germany}
\acmBooktitle{Proceedings of the 19th ACM Conference on Security and Privacy in Wireless and Mobile Networks (WiSec '26), June 30-July 03, 2026, Saarbrücken, Germany}
\acmDOI{10.1145/3765613.3811683}
\acmISBN{979-8-4007-2201-1/2026/06}

\begin{document}

\title{Security Analysis of Time-of-Arrival Estimation via Cross-Correlation under Narrow-Band Conditions
}
\author{Claudio Anliker}
\email{claudio.anliker@inf.ethz.ch}
\orcid{TODO}
\affiliation{%
  \institution{ETH Zurich}
  \city{Zurich}
  \country{Switzerland}
}

\author{Daniele Coppola}
\orcid{TODO}
\affiliation{%
  \institution{ETH Zurich}
  \city{Zurich}
  \country{Switzerland}
}
\email{daniele.coppola@inf.ethz.ch}

\author{Giovanni Camurati}
\orcid{TODO}
\affiliation{%
  \institution{ETH Zurich}
  \city{Zurich}
  \country{Switzerland}
}
\email{giovanni.camurati@inf.ethz.ch}

\author{Srdjan \v{C}apkun}
\orcid{TODO}
\affiliation{%
  \institution{ETH Zurich}
  \city{Zurich}
  \country{Switzerland}
}
\email{srdjan.capkun@inf.ethz.ch}

\renewcommand{\shortauthors}{Claudio Anliker, Daniele Coppola, Giovanni Camurati, and Srdjan Čapkun} 
\begin{abstract}
Time-of-arrival (ToA) estimation via cross-correlation is an essential building block of time-of-flight ranging. However, in narrowband systems, it is notoriously difficult to protect against \emph{distance-decreasing attacks} such as Early-Detect/Late-Commit (ED/LC).

We present and analyze two new attacks that reshape ranging signals to compromise correlation-based ToA estimation. The first attack multiplies the signal by a symbol-periodic waveform in the time domain, while the second passes it through a negative group delay (NGD) filter. In contrast to ED/LC, our attacks do not require real-time symbol detection or adaptive compensation; they are completely \emph{symbol-agnostic}. We describe implementation strategies for both attacks and discuss NGD filtering in the context of Bluetooth Channel Sounding (CS), a recent narrowband ranging system. To this end, we simulate an NGD circuit in LTspice and a ToA estimator in MATLAB, demonstrating that the attack can result in distance reductions of up to \SI{18}{\meter} against Bluetooth CS RTT ranging. Finally, we verify the feasibility of the NGD approach by building a prototype using commercial off-the-shelf components.
\end{abstract}

\begin{CCSXML}
<ccs2012>
<concept>
<concept_id>10002978.10003014.10003017</concept_id>
<concept_desc>Security and privacy~Mobile and wireless security</concept_desc>
<concept_significance>500</concept_significance>
</concept>
</ccs2012>
\end{CCSXML}

\ccsdesc[500]{Security and privacy~Mobile and wireless security}

\keywords{Secure ranging, distance bounding, cross-correlation, time-of-arrival estimation, Bluetooth CS}

\maketitle

\section{Introduction}
\emph{Secure ranging} refers to the capability to upper-bound the physical distance between devices~\cite{luo2024secure}. It is at the core of proximity-based access control systems, enabling the use of consumer devices, such as smartphones or smartwatches, as physical-world keys. Many such systems deploy time-of-flight (ToF) ranging~\cite{IEEE802.15.4z, IEEE802.15.4ab, IEEE802.11az, BluetoothSpec}, which estimates the distance between devices from the propagation delay of wireless signals. 

ToF ranging relies on accurate time-of-arrival (ToA) estimation. A common method involves cross-correlating the received signal, which carries a pre-shared waveform, with a local template. In an additive white Gaussian noise (AWGN) channel, this template acts as a matched filter, and the correlation peak yields the maximum-likelihood estimate of the ToA.

A \emph{distance-decreasing attack} or \emph{distance-reduction attack} attempts to shorten the distance estimate produced by the transceivers. In ToF ranging, these attacks typically aim to bias the ToA estimates used to derive the ToF and, thus, the distance.
To mitigate distance-decreasing attacks, ranging signals must be unpredictable; otherwise, a synchronized attacker could manipulate the ToA by injecting the signal ahead of schedule. Furthermore, any real-time information leakage must not enable a ToA advance exceeding the system's security margin. Therefore, ToA estimation using waveforms with high temporal redundancy is risky; an attacker could identify a symbol after only \emph{partial reception} and inject a crafted waveform to bias the receiver's ToA estimate. This is typical of narrowband systems, which tend to require longer symbol durations due to the time-frequency tradeoff. 

Clulow et al. first discussed the problem of long symbols in the context of ranging in 2006~\cite{DBLP:conf/esas/ClulowHKM06}. Their work formed the basis for attacks against various ranging technologies, such as RFID/433MHz ASK/FSK~\cite{DBLP:conf/wisec/HanckeK08}, Impulse-Radio Ultra-Wideband~\cite{DBLP:conf/wisec/FluryPPHB10}, Chirp Spread Spectrum ranging~\cite{DBLP:conf/wisec/RanganathanDFC12}, multi-carrier (OFDM) ranging~\cite{leu2021security}, and GNSS systems~\cite{zhang2019effects}. Current ranging technologies deploy different strategies to resolve this issue: Ultra-Wideband Ranging (IEEE 802.15.4z/ab) uses only nanosecond-long symbols~\cite{IEEE802.15.4z, IEEE802.15.4ab}, Next Generation Positioning (IEEE 802.11az) increases signal entropy to render long symbols unpredictable~\cite{Batra2020_IEEE80211az}, and Bluetooth Channel Sounding relies on detection mechanisms in the receiver~\cite{BluetoothSpec}\footnote{Vol.6, Part H, Section 3.5.3.\label{bl_countermeasures}}.

The idea of detecting symbols after a partial observation and reactively manipulating the waveform is called \emph{Early-Detect/Late-Commit} (ED/LC). While ED/LC is conceptually straightforward, its adaptive nature presents practical challenges: the detection phase typically requires a considerable signal-to-noise ratio (SNR) advantage over the legitimate receiver, the manipulation may be detected by countermeasures~\cite{BluetoothSpec}\footref{bl_countermeasures}, and the execution is highly time-constrained.

\paragraph{Our Contributions:}
In this work, we extend the study of ED/LC by exploring how symbol-synchronous signal reshaping enables distance-decreasing attacks against correlation-based ranging systems. Unlike ED/LC, our attacks are \emph{symbol-agnostic}; they bypass the need for real-time symbol decoding, considerably lowering the practical barriers to exploitation. In summary, our contributions are:

\begin{itemize}
    \item \emph{Novel distance-decreasing attacks:} \revlabel{2}\blueon We present two symbol-agnostic strategies: \emph{temporal masking} against linear modulations and \emph{NGD filtering}, a more general, modulation-agnostic approach utilizing negative group delay (NGD) filters.\blueoff
    \item \emph{Systematic Analysis:} We formalize how these attacks violate the core assumptions of correlation-based ToA estimation and derive the specific requirements for a successful distance reduction.
    \item \emph{Proof of concept:} We validate NGD filtering by simulating and prototyping an analog circuit using the LTC6268 operational amplifier, and outline the implementation of the temporal masking attack.
    \item \emph{Application to Bluetooth CS:} Using a MATLAB-based receiver, we evaluate NGD filtering against Bluetooth Channel Sounding (CS). Our results confirm distance reductions of up to \SI{18}{\meter} and demonstrate that the attack is hard to detect with security metrics defined in the Bluetooth Core Specification~\cite{BluetoothSpec}.
\end{itemize}

\section{Background}\label{sec:background}
\paragraph{ToF computation} Two-way ranging (TWR) is a widely used ToF protocol involving a bi-directional message exchange between two devices (see \autoref{fig:twr}). In Bluetooth CS, the \emph{initiator} starts the procedure and the \emph{reflector} responds. The initiator computes the ToF as:

\begin{equation}
   T_\mathrm{ToF} = \frac{T_{\mathrm{round}} - (1-\hat{e})T_{\mathrm{reply}}}{2},
\end{equation}
where $T_{\mathrm{reply}}$ is the reply time reported by the reflector and $\hat{e}$ the clock drift between the devices. The primary advantage of TWR is that it eliminates the need for tight clock synchronization.

\paragraph{ToA estimation} ToF ranging requires both devices to estimate the time of arrival (ToA) of received messages with high precision. This is typically achieved via cross-correlation: the transmitter sends a pre-shared signal, which the receiver correlates with a local reference template. The receiver identifies the ToA, $\tau_0$, by locating the correlation peak:

\begin{align}
\tau_0 = \underset{\tau}{\mathrm{argmax}}\left|\int y(t)x^{*}(t-\tau)\,dt\right|,
\end{align}
where $y(t)$ is the received signal, which is typically distorted by the channel, and $x(t)$ is the template. In an additive white Gaussian noise (AWGN) scenario, $\tau_0$ is the maximum-likelihood estimate for the ToA. In digital receivers, this integral is represented by a discrete sum, and the peak of the underlying continuous function lies between two samples. To achieve sub-sample ToA accuracy, receivers typically use polynomial interpolation, leveraging the fact that the correlation peak is approximately parabolic.

\paragraph{ToA estimation accuracy} The Cramér-Rao Lower Bound (CRLB) for the variance of any unbiased ToA estimator is defined as~\cite{gezici2005localization}:
\begin{equation}\label{eq:crlb}
\mathrm{Var}(\hat{\tau}) \geq \frac{1}{8\pi^2\cdot\mathrm{SNR} \cdot \beta_{\mathrm{RMS}}^2},
\end{equation}
where $\mathrm{SNR}$ represents the signal-to-noise ratio prior to matched filtering and $\beta_{\mathrm{RMS}}$ is the root-mean-square (RMS) bandwidth, defined as
\begin{equation}\beta_{\mathrm{RMS}} = \left(\frac{\int_{-\infty}^{\infty} f^2 |S(f)|^2 \, df}{\int_{-\infty}^{\infty} |S(f)|^2 \, df}\right)^{1/2}.\end{equation}

Intuitively, a larger $\beta_{\mathrm{RMS}}$ reduces the estimation error because signals occupying a wider spectrum exhibit sharper transitions in the time domain, resulting in a narrower autocorrelation peak that is easier to localize. To improve the SNR, ranging systems may employ higher transmission power, high-gain antennas, or longer integration times (i.e., longer signal sequences). The CRLB explains why Ultra-Wideband (UWB), with a bandwidth of $\approx\SI{500}{\mega\hertz}$, can achieve sub-decimeter ranging accuracy using a single measurement, while narrowband technologies like Bluetooth CS RTT (\SI{2}{\mega\hertz}) are inherently less precise at the same SNR.

\begin{figure}[t]
    \centering
        \resizebox{0.8\linewidth}{!}{
    \includegraphics{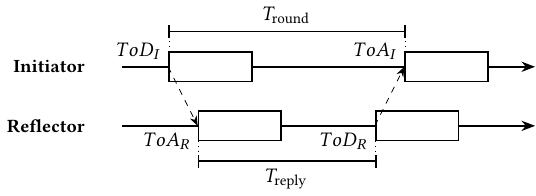}
    }
     \caption{The two-way ranging protocol. The reflector reports $T_{\mathrm{reply}}$ to the initiator, which computes the ToF.} 
    \label{fig:twr}
\end{figure}

\paragraph{Bandwidth and ToA integrity}
Narrowband systems are not only less accurate but also more difficult to secure. Reducing a signal's bandwidth stretches the waveform in the time domain, creating a window of opportunity for distance-decreasing attacks such as \emph{Early-Detect/Late-Commit (ED/LC)}. ED/LC is an umbrella term for attacks where an attacker exploits long symbol durations to infer a symbol from a partial observation (early detection) and subsequently injects an crafted signal to advance the receiver’s ToA estimate (late commit). Figure 4 of~\cite{leu2021security} illustrates this process.

\section{Attacking the ToA Estimation}\label{sec:attack_intro}
The ED/LC attack mentioned above is only relevant to ToF ranging systems for which a ToA advance of less than a symbol duration constitutes a security risk. This is typically the case for single-carrier, narrowband ranging. In this work, we focus on such systems.

In practice, performing ED/LC can be challenging: the early detection phase is time-critical and typically requires a high signal-to-noise ratio (SNR), and the late-commit stage may cause detectable mid-symbol transitions ~\cite{BluetoothSpec}\footnote{Vol.6, Part H, Section 3.5.3}. By contrast, the attacks we propose are \emph{symbol-agnostic}, demonstrating that an attacker can manipulate correlation-based ToA estimation without symbol decoding or adaptive compensation.

\subsection{Overview}
We first explain the intuition behind the attacks and provide a systematic analysis in \autoref{sec:analysis}.

\begin{figure}
    \centering
    \includegraphics[width=\linewidth]{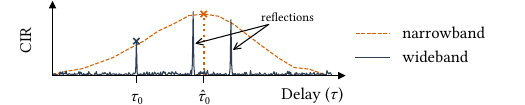}
     \caption{Narrowband signals typically blur the channel impulse response ($\mathrm{CIR}$) into a single peak, resulting in an inaccurate ToA estimate ($\hat{\tau}_0$). By contrast, a wideband system can estimate the first-path $\tau_0$ with high precision.}
    \label{fig:nb_wb_comparison}
\end{figure}

\subsubsection{The Crux with Narrowband ToA Estimation} 
Narrowband symbols are typically longer than the expected channel delay spread, which is in the hundreds of nanoseconds~\cite{molisch2009ultra}. Consequently, cross-correlation collapses the channel impulse response (CIR) into a single peak, which is dominated by the shape of the narrowband waveform (see \autoref{fig:nb_wb_comparison}). This has profound security implications: In UWB, advancing the correlation peak in a conservative configuration\footnote{In the sense that the leading-edge (first-path) detection favors security over reliability.} is infeasible without knowing the pseudo-random \emph{symbol values} beforehand. By contrast, in narrowband systems, manipulating the \emph{symbol shape} may be sufficient to considerably change the correlation peak, thus compromising the ToA estimate.

\begin{figure}
    \centering
    \includegraphics[width=\linewidth]{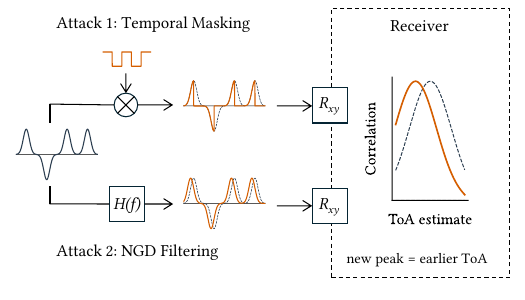}
     \caption{Overview of the proposed attacks: The ranging signal is multiplied by a mask (Attack 1) or passed through an NGD filter (Attack 2). In both cases, the output (orange/solid) advances the correlation peak. }
    \label{fig:overview}
\end{figure}

\subsubsection{Intuition}\label{sec:intuition}
The goal of the proposed attacks is to \emph{reshape} the ranging signal $x(t)$ over the air such that the receiver's correlation peak is shifted to the left by a non-negligible $\Delta t$. From the receiver's perspective, this shift indicates that the signal arrived $\Delta t$ earlier. 
As shown in \autoref{fig:equation_overview}, we achieve this by multiplying the ranging signal by a mask $m(t)$ or applying a filter $h(t)$. Both methods yield similar results, as illustrated in~\autoref{fig:overview}. 

Let $\tilde{x}(t)$ denote the attack signal. Ideally, an attacker would advance the correlation peak by transmitting $\tilde{x}(t) = x(t+\Delta t)$. However, this signal is not causal, as it requires a perfect prediction of $x(t)$. Instead, an attacker could \emph{approximate} the advanced signal $x(t+\Delta t)$ using a first-order Taylor expansion:

\begin{equation}
\tilde{x}(t) \approx x(t+\Delta t) = x(t) +\Delta tx'(t) + \mathcal{O}((\Delta t)^2), \label{eq:tildex_taylor} 
\end{equation}
which relies only on the current state of $x(t)$ and is, thus, causal. 

In the following, we formalize this requirement more rigorously (\autoref{sec:analysis}) and show how to construct $\tilde{x}(t)$ without symbol knowledge (\autoref{sec:attack_mixer} and \autoref{sec:attack_filter}).

\begin{figure}
\centering
\includegraphics[width=0.8\linewidth]{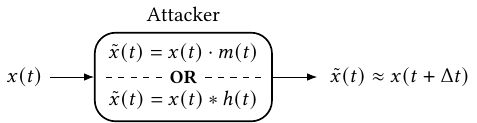}
\caption{Attack description with equations. The signal $x(t)$ is altered over the air, and the output $\tilde{x}(t)$ appears to the receiver like an advanced version of $x(t)$. }
\label{fig:equation_overview}
\end{figure}

\subsection{Analysis}\label{sec:analysis}
\revlabel{4}\blueon
In this section, we formally derive the requirement for a successful ToA advancement.
\subsubsection{Assumptions}
The proposed attacks are symbol-agnostic and affect each symbol of the ranging message equally. Without loss of generality, we can thus consider a message carrying a \emph{single, randomly drawn symbol} and rely on the linearity of the cross-correlation operator to generalize the result to longer messages. 

\paragraph{Continuous-time model:} To decouple the analysis from specific receiver designs or sampling rates, we model all signals and their correlations in the continuous-time domain. This approach reflects the operational reality of digital receivers, which typically leverage interpolation to characterize the underlying continuous waveform to achieve sub-sample ToA precision. This model implies a linear time-invariant (LTI) front-end, an ideal anti-aliasing filter before the sampling stage to prevent aliasing of high-frequency noise, and sufficient oversampling.

\paragraph{Attacker and channel:} We define the baseband signal at the receiver as
\begin{equation}
    \tilde{x}(t) = x(t) + \Delta x(t),\label{eq:x_tilde}
\end{equation}
where $x(t): \mathbb{R} \rightarrow \mathbb{C}$ is the baseband signal of the symbol transmitted by the legitimate sender and $\Delta x(t)$ the \emph{perturbation} introduced by an attacker. Both signals are smooth, continuous, and have finite support.

In a frequency-flat fading model, the input $x(t)$ and the output $y(t)$ are related by
\begin{equation}
    y(t) = hx(t) + n(t),
\end{equation}
where the channel $h$ is modeled using a single complex number, and $n(t)$ is AWGN. \autoref{eq:x_tilde} implies an ideal, noiseless channel with zero delay, i.e., $h = 1$ and $n(t) = 0$. This simplification allows us to describe the interaction between the legitimate signal and the perturbation in isolation, and the zero-delay assumption ensures that the autocorrelation peak is always at 0, thus simplifying notation.

Finally, we assume that $x(t)$ and $\Delta x(t)$ are phase-coherent. As illustrated in \autoref{fig:overview}, the perturbation $\Delta x(t)$ is the result of applying a temporal mask or filtering the legitimate signal $x(t)$, rather than modulating an independent carrier. This approach prevents the destructive interference caused by adding passband signals generated using unsynchronized oscillators. While a filter may introduce a deterministic phase rotation to $\tilde{x}(t)$, the ToA estimator neutralizes this rotation by computing the absolute value. We address independent, non-coherent address signals at the end of this section.

\paragraph{ToA estimation}
We define the ToA estimate of the attacked receiver, $\tilde{\tau}_0$, as 
\begin{align}
\tilde{\tau}_0 = \underset{\tau}{\mathrm{argmax}}\,|\tilde{R}(\tau)|^2
\end{align}
where $\Rt(\tau)$ is the cross-correlation
\begin{equation}
\Rt(\tau) = \infint \tilde{x}(t)x^*(t-\tau)\, dt. \label{eq:R_tilde}
\end{equation}
We prefer $|\Rt(\tau)|^2$ over the magnitude $|\Rt(\tau)|$ to simplify the subsequent derivation. By the linearity of the cross-correlation operator, we can write $\Rt(\tau)$ as
\begin{equation}
    \Rt(\tau) = \R(\tau) + \Delta R(\tau)\label{eq:R_linearity},
\end{equation}
where $R(\tau)$ is the autocorrelation of $x(t)$ and $\Delta R(\tau)$ is the cross-correlation between the perturbation $\Delta x(t)$ and the local template. For brevity, we further define $P(\tau) = |R(\tau)|^2$, $\Delta P(\tau) = |\Delta R(\tau)|^2$, and $\tilde{P}(\tau) = |\tilde{R}(\tau)|^2$.

\subsubsection{Finding $\Delta{x}(t)$}
We can characterize the ToA advance by analyzing the first derivative of the cross-correlation. For the autocorrelation $P(\tau)$, the peak is centered at $\tau=0$ by construction and 

\begin{equation}
P'(0)= 2\Re\{R^*(0)R'(0)\} = 0.
\end{equation} 
In a successful attack, the peak of the cross-correlation $\tilde{P}(\tau)$ must be at a $\tilde{\tau}_0 < 0$. Assuming the perturbation $\Delta x(t)$ preserves the unimodality of the cross-correlation in the relevant window around the peak, a sufficient condition for a ToA advance is

\begin{equation}
    \tilde{P}'(0) < 0,\label{eq:derivative_zero_condition}
\end{equation}
As derived in \autoref{sec:appendix_derivation_derivative}, this derivative can be expressed as:
\begin{align}
   \tilde{P}'(0) =-2\cdot \Re\big\{\langle \Delta x, x'\rangle \cdot \langle\tilde{x}, x\rangle^*\big\}\label{eq:derivative_zero}
\end{align}
This result confirms that the perturbation $\Delta x(t)$ must project onto the derivative $x'(t)$, matching the intuition presented in \autoref{sec:intuition}. The inner product $\langle \tilde{x}, x\rangle$ represents the cross-correlation between the attack signal and the local template and factors in the signal's energy. Given that $\Delta x(t)$ is phase-coherent with $x(t)$ and preserves peak unimodality (i.e., it is not strongly anti-correlated with $x(t)$), $\langle \tilde{x}, x\rangle$ acts as a positive real scalar. Consequently, combining \autoref{eq:derivative_zero_condition} with \autoref{eq:derivative_zero} reduces the attack requirement to
\begin{equation}
\Re\left\{\langle \Delta x, x'\rangle \right\} > 0 \Leftrightarrow \Re\left\{\langle \tilde{x}, x'\rangle \right\} > 0,\label{eq:delta_x_req}
\end{equation}
where we use the fact that $\Re\{\langle x, x'\rangle\} = 0$ for signals with finite support, as the autocorrelation derivative at the origin is purely imaginary.

\subsubsection{Estimating the ToA Advance}
We can estimate $\tilde{\tau}_0$ with a first-order Taylor expansion of $\tilde{P}'(0)$ around the origin:
\begin{equation}
\tilde{P}'(\tilde{\tau}_0) = 0\approx \tilde{P}'(0) + \tilde{\tau}_0 \tilde{P}''(0), \end{equation}
which results in
\begin{align}
\tilde{\tau}_0 &\approx -\frac{ \tilde{P}'(0)}{\tilde{P}''(0)}\approx - \frac{\tilde{P}'(0)}{P''(0)}.\label{eq:delta_tau_approx}
\end{align}
This approximation assumes that the perturbation $\tilde{x}(t)$ only minimally alters the shape of the correlation, such that the curvature at the peak is dominated by the original signal ($\tilde{P}''(0)\approx P''(0)$). Since the curvature of an autocorrelation peak is strictly negative, \autoref{eq:delta_tau_approx} shows that any perturbation resulting in $\tilde{P}'(\tau) < 0$ will advance the peak. By substituting \autoref{eq:derivative_zero} into the numerator, we can directly map the attacker's perturbation $\Delta x(t)$ to the resulting time shift. Furthermore, the magnitude of the advance is inversely proportional to $P''(0)$, which is fundamentally tied to the RMS bandwidth of $x(t)$ (see $\beta_{\mathrm{RMS}}$ in \autoref{sec:background}).

\subsubsection{Channel Effects and Phase Alignment}
In the above derivation we assumed zero phase offset between the passband equivalents of $x(t)$ and $\Delta x(t)$. This assumption holds for our attacks as explained at the beginning of this section. However, if the attacker introduces the perturbation $\Delta x(t)$ via their own passband signal, they must ensure the carrier phases of the perturbation and the legitimate signal are aligned. Otherwise, the definition of $\tilde{x}(t)$ changes into
\begin{equation}
    \tilde{x}(t) = x(t) + \Delta x(t)e^{j\phi},
\end{equation}
and the phase shift $\phi$ propagates into the final result:
\begin{equation}
\Re\left\{e^{j\phi}\langle \Delta x, x'\rangle \right\} > 0\label{eq:delta_x_req_phi}
\end{equation}
This phase dependency can have a considerable impact on the attack. If the attacker's carrier is out of phase by $\phi=\pi/2$, the attack has no effect. If the phase difference is $\phi=\pi$, $\Delta x(t)$ causes massive destructive interference with $x(t)$.

Next, we discuss how to construct $\tilde{x}(t)$ dynamically from $x(t)$ without symbol knowledge. 

\blueoff

\subsection{Attack I: Temporal Masking}\label{sec:attack_mixer}
We define the attack signal as $\tilde{x}(t) = x(t)m(t)$, where $m(t)$ is a pre-defined symbol-periodic mask. This attack is suited for \emph{linear modulations employing a fixed pulse-shaping filter} $g(t)$, such as ASK, PSK, or QAM. In these modulations, the signal derivative  is $x'(t) = s \cdot g'(t)$, where $s$ denotes the symbol value unknown to the attacker. While $s$ determines the signal's polarity or phase, the shape of $g'(t)$ is known. Consequently, the attacker can design $m(t)$ to selectively amplify $x(t)$ when $g'(t) > 0$ and attenuate it elsewhere, effectively meeting the requirement derived in \autoref{eq:delta_x_req} and advancing the cross-correlation peak.

\autoref{fig:cut_example} illustrates two attack signals based on this principle: $\tilde{x}_1(t)$ is generated by multiplying $x(t)$, a Gaussian pulse, by a reversed step function, which truncates its trailing half. By contrast, $\tilde{x}_2(t) = x(t)e^{\alpha g'(t)}$ applies a smoother, more subtle distortion (illustrated with $\alpha = 0.3$). Both signals satisfy the condition $\Re\{\langle \Delta x, x'\rangle\} > 0$ and break the symmetry of the cross-correlation by shifting the energy centroid of the waveform to the left. If applied consistently to each symbol of a ranging signal, this bias "pulls" the correlation peak to a negative delay, resulting in a ToA advancement.

\begin{figure}[t]
    \centering
    \includegraphics[width=\linewidth]{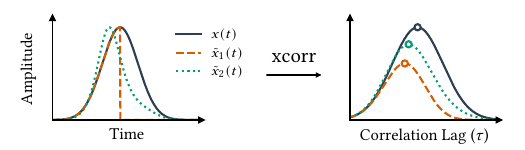}
        \label{fig:cut_example_a}
    \vspace{-1.2em} %
     \caption{Attack signals $\tilde{x}_1(t)$ and $\tilde{x}_2(t)$ (left) cross-correlated with the template $x(t)$ advance the correlation peak (right).}
    \label{fig:cut_example}
\end{figure}

\subsubsection{Frequency-Shift Keying (FSK)}
In FSK, the baseband signal is defined as

\begin{equation}
x(t) = e^{j\phi(t)}\text{, with }\phi(t) = \int_0^t \omega(\tau)\,d\tau
\end{equation}
where $\omega(t)$ denotes the angular velocity. Unlike in linear modulations, using $m(t)$ to reshape the signal does not affect an ideal FSK receiver\footnote{We discuss the impact of non-linear RF front-ends in \autoref{sec:discussion_bt}.}, as the symbol information is carried within the phase trajectory $\phi(t)$. Because the signal has a constant envelope, \autoref{eq:derivative_zero} evaluates to zero (see \autoref{sec:fsk_innerp}), meaning the correlation peak remains stationary.

We observe that $m(t): \mathbb{R}\rightarrow\mathbb{C}$ may alter the phase trajectory of $x(t)$ and influence the correlation peak. However, in this scenario, the ToA shift depends on the unknown symbol value, and for longer sequences, the shifts are normally distributed with mean $\mu = 0$. Thus, achieving a deterministic ToA advancement is not feasible (see \autoref{sec:fsk_tildex}). Furthermore, real-world narrowband systems like Bluetooth CS typically average multiple measurements to enhance accuracy~\cite{BluetoothSpec}, which further mitigates the impact of the attack.

\subsubsection{Practical Considerations}\label{sec:masking_practical_considerations}
The most straightforward implementation of temporal masking involves attenuating or truncating the tail of each symbol, as shown in \autoref{fig:cut_example}. In hardware, this can be realized with an analog RF switch driven by a square wave synchronized to the symbol rate. If the transceivers are within operational range, the attacker can prevent interference from the legitimate signal by amplifying the truncated version using directional antennas or an amplifier. This will trigger the receiver's automatic gain control to adjust to the high power level, effectively attenuating the legitimate signal.

\paragraph{Synchronization} Precise synchronization between the mask and the ranging signal is essential for a successful attack. However, the legitimate receiver must synchronize with the incoming signal as well for demodulation, typically using a predefined preamble. Thus, the attacker can leverage the same mechanisms to acquire synchronization.

\subsection{Attack II: NGD Filtering}\label{sec:attack_filter}
In this section, we describe how a narrowband signal can be advanced in time by passing it through a specific filter.

\subsubsection{Negative Group Delay}
The ideal, non-causal filter that perfectly advances a signal by $\Delta t$ is defined as $h_{\mathrm{ideal}}(t) = \delta(t+\Delta t)$. Our goal is to find a causal, realizable approximation $h(t) \approx h_{\mathrm{ideal}}(t)$, i.e.,

\begin{align}
\tilde{x}(t) = x(t) * h(t) &\overset{\hphantom{\autoref{eq:tildex_taylor}}}{\approx} x(t+\Delta t)\notag \\
&\overset{\autoref{eq:tildex_taylor}}{\approx} x(t) + \Delta t x'(t)+ \mathcal{O}((\Delta t)^2).
\end{align}
where we use the Taylor expansion from \autoref{eq:tildex_taylor}. This approximation requires $h(t)$ to exhibit a \emph{negative group delay} (NGD) within the bandwidth of the input signal. The group delay $\tau_g$ is defined as

\begin{equation}
\tau_g(\omega) \triangleq -\frac{d\phi(\omega)}{d\omega},\label{eq:gdl}
\end{equation}
where $\phi(\omega) = \angle H(\omega)$ is the phase response of the filter. In essence, the group delay characterizes the temporal shift experienced by a signal's envelope when it passes through the system; while a constant positive group delay $\tau_g=\Delta\tau$ results in a time-domain delay $x(t)\rightarrow x(t-\Delta\tau)$, a \emph{constant} NGD causes a non-causal advancement. 

Crucially, a causal and realizable circuit can exhibit NGD within a \emph{restricted frequency band}, provided it maintains a positive group delay elsewhere in the spectrum. If the input signal's energy is concentrated primarily within this NGD region, the output appears advanced with minimal distortion. Thus, our goal is to design a filter $h(t)$ with approximately constant NGD in the band containing the main lobe of $x(t)$, while accepting non-linear and positive group delay at frequencies with negligible spectral energy. As depicted in \autoref{fig:ngd_demo}, such a filter creates the illusion of a time shift by reshaping the signal envelope. Prior work on NGDs exists~\cite{ngd_manitoba, Nako2024, Choietal2010}, but not in the context of secure ranging.

\begin{figure}[t]
  \centering
    \includegraphics[width=\linewidth]{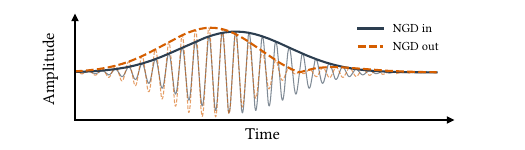}
    \caption{%
    NGD effect on a passband signal: the signal envelope is advanced but suffers a minor distortion of its tail.}
    \label{fig:ngd_demo}
\end{figure}

\subsubsection{Finding $h(t)$} Recalling \autoref{eq:delta_x_req}, we start with a simple filter

\begin{equation}
    \tilde{x}(t) = x(t) * h(t) \approx x(t) + \Delta t x'(t),
\end{equation}
which has the transfer function $H(\omega)$ and the group delay

\begin{align}
     H(\omega)=1+j\omega\Delta t,\qquad\qquad\tau_g(\omega) = -\frac{\Delta t}{1+(\omega\Delta t)^2}
\end{align}
where $\omega=2\pi f$. Although the NGD is promising, the filter is not realizable because 
$|H(\omega)| \rightarrow \infty$ for $\omega \rightarrow \infty$. However, we can further approximate this filter with

\begin{equation}
    H(\omega) = 1 + \frac{\Delta t j\omega}{1+\Delta t j\omega}\Rightarrow\tau_g(\omega) = -\Delta t \frac{1 - 2(\omega \Delta t)^2}{(1 + \omega^2 \Delta t^2)(1 + 4\omega^2 \Delta t^2)}\label{eq:ngd_filter}
\end{equation}

The amplitude response and group delay of this filter are shown in \autoref{fig:ngd_filter} for $\Delta t=\SI{50}{\nano\second}$: the filter has bounded gain and $\tau_g\approx-\SI{50}{\nano\second}$ in the band $0-\SI{0.5}{\mega\hertz}$. Although $H(\omega)$ is a high-pass filter, it only amplifies high-frequency noise by a factor of 2 (\SI{6}{\decibel}) for $\omega\rightarrow \infty$. Thus, this filter advances a \SI{1}{\mega\hertz}-wide baseband signal by roughly \SI{50}{\nano\second} or \SI{15}{\meter} with minor distortion. 

\begin{figure}
 \centering
    \includegraphics[width=\linewidth]{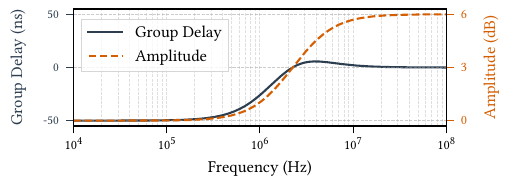}
     \caption{Amplitude response and group delay of $H(f)$ with $f=\omega/(2\pi)$. %
     This filter applies a time-shift of $\Delta t\approx-\SI{50}{\nano\second}$ to signals with $f\leq\SI{1}{\mega\hertz}$.}
       \label{fig:ngd_filter}
\end{figure}

\section{Evaluation: Bluetooth Channel Sounding}\label{sec:evaluation}
In this section, we describe a concrete NGD circuit tailored to Bluetooth CS, and present both a hardware simulation and a physical proof-of-concept (PoC). While the hardware simulation allows for a systematic analysis, the PoC demonstrates the practical feasibility of the attack using commercial off-the-shelf (COTS) components.

\subsection{The NGD Circuit}\label{sec:ngd_circuit}

\begin{figure*}
     \centering
     \begin{subfigure}[b]{0.74\textwidth}
         \centering
         \includegraphics[trim=0cm 20.5cm 0cm 0cm, clip, width=\textwidth]{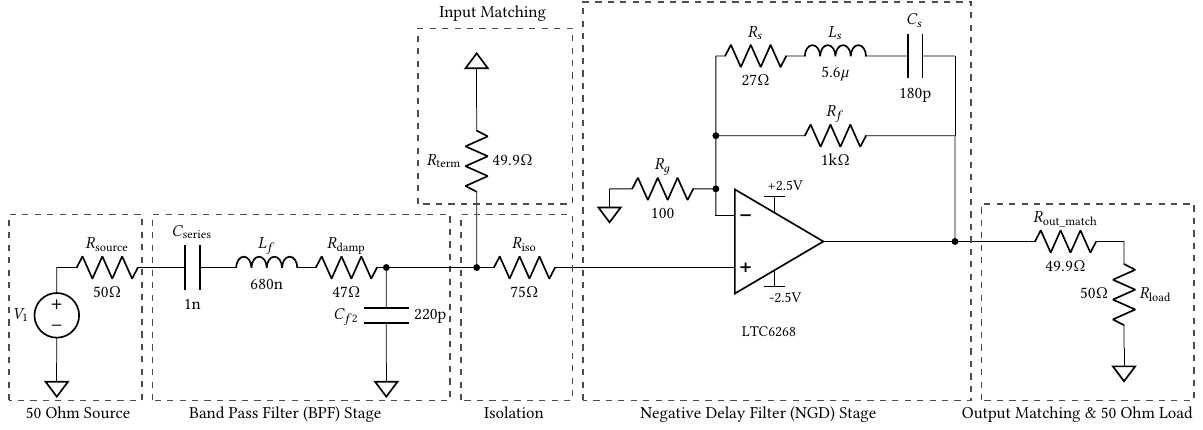} %
         \caption{Circuit schematic}
         \label{fig:ngd_schematic}
     \end{subfigure}
     \hfill
     \begin{subfigure}[b]{0.24\textwidth}
         \centering
    
             \includegraphics[height=4.4cm]{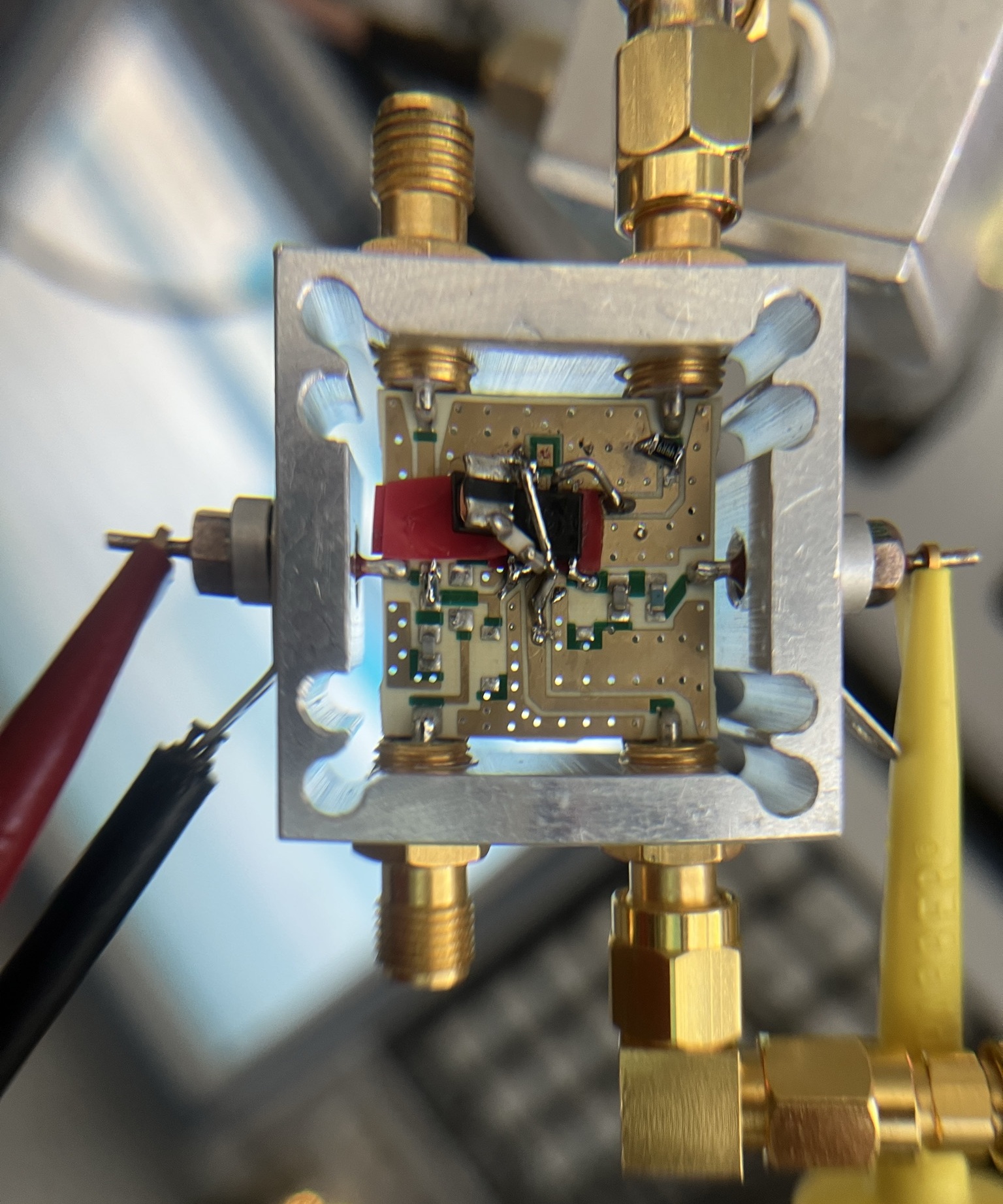}
             \caption{Prototype of the NGD stage}
             \label{fig:poc}
     \end{subfigure}
     
     \caption{NGD circuit based on the LTC6268 operation amplifier.}
     \label{fig:ngd}
\end{figure*}

In practice, NGD circuit operation is more robust at lower intermediate frequencies (IF) than at baseband or at \SI{2.4}{\giga\hertz}. Consequently, our design is optimized for a \SI{4.77}{\mega\hertz} IF, which requires down-converting Bluetooth CS packets to this frequency. We discuss this process in the threat model in \autoref{sec:bluetooth_threatmodel}.

\subsubsection{Design} Several active and passive NGD filter designs have been proposed and implemented in existing literature~\cite{ngd_manitoba}. We choose an active configuration using an operational amplifier to ensure the signal is not excessively attenuated. The circuit schematic is shown in~\autoref{fig:ngd_schematic} and comprises the following stages:
\begin{enumerate}
\item \emph{Input/Output:} A \SI{50}{\ohm} signal source and a \SI{50}{\ohm} load. %
\item \emph{Bandpass filter:} An RLC network to suppress out-of-band noise. This stage adds a positive group delay of around \SI{10}{\nano\second}. %
\item \emph{Isolation and Damping:} Resistors used to stabilize the circuit and fine-tune the frequency response. 
\item \emph{NGD stage:} An active second-order band-stop filter based on the LTC6268 operational amplifier in a non-inverting configuration. %
\item \emph{Impedance Matching:} Input and output impedance are set to \SI{50}{\ohm} with appropriate resistors.

\end{enumerate}

\subsubsection{Hardware Simulation and Prototyping}\label{sec:padding}
We evaluate the circuit design through time-domain and frequency-domain analyses using the Analog Devices LTspice~\cite{ltspice} circuit simulator. To ensure a realistic simulation, we incorporated the following measures:
\begin{enumerate}
    \item \emph{Transient response testing:} The input signal is bracketed by zero-valued padding to test the circuit's stability at packet boundaries.
    \item \emph{High-resolution sampling:} A sampling rate of 200 GSa/s (resolution of \SI{5}{\pico\second}) allows for testing the circuit's stability and parasitic effects over a wide band.
    \item \emph{COTS hardware:} The design only comprises commercially available components. This ensures the model accounts for realistic performance constraints and component-specific non-idealities, such as operational amplifier bandwidth, stability margins, and series resistance of inductors. 
    \item \emph{Modeling of parasitic effects:} We manually added parasitic capacitances to model PCB effects near the negative pin of the operational-amplifier.
    \item \emph{Noise:} To analyze the effect of out-of-channel noise on the circuit, we added Gaussian noise to the input signal in one experiment (see configuration 4 in Experiment 2).
\end{enumerate}

\revlabel{PoC}\blueon  To validate the practical feasibility of the proposed design, we build a physical PoC (\autoref{fig:poc}), which we evaluate in \autoref{sec:experiment_4}.\blueoff

\subsection{Methodology}
Before we present our evaluation in the next section, we summarize key Bluetooth CS features and introduce our threat model.

\subsubsection{Bluetooth CS Primer}\label{sec:bluetooth_background}
The following information is taken from the Bluetooth Core Specification, Version 6.2~\cite{BluetoothSpec}.
\paragraph{Ranging modes} Bluetooth CS supports phase-based (PBR) and ToF/round-trip time (RTT) ranging. Both modes can be used separately or in combination. This work focuses exclusively on RTT ranging.

\paragraph{PHY Layers:} Bluetooth CS provides three different PHYs:  \emph{LE 1M} (\SI{1}{\mega\hertz} channel bandwidth), and \emph{LE 2M}/\emph{LE 2M 2BT} (\SI{2}{\mega\hertz}). We analyze the LE 1M and LE 2M in our experiments.

\paragraph{Frequency-hopping spread spectrum (FHSS)} Bluetooth employs FHSS to mitigate interference and fading by switching channels between packets. The hopping pattern is pseudo-random and determined during the connection procedure.

\paragraph{Packet formats} The RTT mode utilizes the CS SYNC Packet (\autoref{fig:cs_sync}), consisting of a preamble, a 32-bit pseudo-random \emph{Access Address}, and a 4-bit trailer. To improve ToA estimation and security, the packet may optionally include a pseudo-random \emph{Random Sequence} (up to 128 bits) or a \emph{Sounding Sequence} (alternating 1/0 pattern with randomly placed markers to prevent ED/LC attacks).

\begin{figure}
    \centering
    \includegraphics[width=0.9\linewidth]{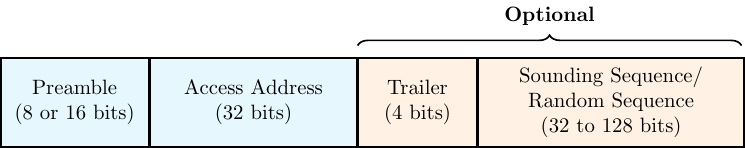}
     \caption{CS SYNC packet in Bluetooth CS RTT ranging.}
    \label{fig:cs_sync}
\end{figure}
\paragraph{Fractional timing estimate (FTE)} The specification details a phase-based algorithm to achieve sub-sample ToA accuracy for Sounding Sequences. For Random Sequences, no algorithm is specified, but receivers are likely to rely on polynomial interpolation.

\paragraph{Cross-Correlation} Bluetooth CS uses \emph{differential cross-correlation}, which provides robustness against carrier frequency offsets. The ToA estimate $\tau_0$ is defined as

\begin{equation} 
\tau_0 = \underset{\tau}{\mathrm{argmax}}\left|\int x(t)x^{*}(t-T_{\mathrm{sym}})s^{*}(t-\tau)s(t-\tau-T_{\mathrm{sym}})\,dt\right|
\label{eq:diff_xcorr}
\end{equation}
where $x(t)$ is the received signal, $s(t)$ is the local reference template, and $T_{\mathrm{sym}}$ the symbol duration. We follow the specification and implement this approach in our receiver.

\paragraph{Normalized Attack Detection Metric (NADM)} 
To flag potential attacks, Bluetooth CS mandates a NADM for secure configurations. While the exact implementation is vendor-specific, the specification suggests three reference metrics, which we refer to as the \emph{Normalized Cross-Correlation (NCC) Metric}, the \emph{Phase Minimum Square Error (PMSE) Metric} and the \emph{DFT Metric}. Our MATLAB receiver computes all three to evaluate the stealthiness of our attack.

\subsubsection{Threat Model}\label{sec:bluetooth_threatmodel}
We adopt the Dolev-Yao model, granting the attacker full control over the wireless channel. This includes the ability to improve the SNR (directional antennas or amplifiers), or to suppress the legitimate signal (obstruction or overshadowing).

We further assume the attacker can perform down-conversion to IF and subsequent up-conversion to the carrier with negligible latency. This was demonstrated in \cite{DBLP:conf/uss/AnlikerCC23} for the  "Mix-Down" attack against UWB. A similar wideband approach can shift all Bluetooth CS channels simultaneously. By utilizing a shared oscillator for both conversions, the attacker ensures no phase offset is introduced. 

Finally, we assume the attacker can synchronize with or bypass the Bluetooth FHSS in the order of microseconds. This could be achieved using a wideband software-defined radio (e.g., USRP X310) equipped with an FPGA-based Polyphase Filter Bank (PFB) and energy detection for rapid channel identification~\cite{nguyen2019energy}. Building such a device is an engineering challenge beyond the scope of this work.

\subsubsection{Experimental Setup}
We simulate the Bluetooth baseband processing in MATLAB. For the transmitter, we use the Mathworks Bluetooth Toolbox~\cite{MathWorksBLEWaveform} to generate CS SYNC packets, exported as complex baseband samples. After the attack, the receiver processes the output signals to estimate the ToA and compute the security metrics listed in \autoref{sec:bluetooth_background}. 

The circuit simulation and the physical PoC are integrated via and controlled by a Python wrapper, which models the RF chains between transmitter, attacker, and receiver.
\paragraph{Simulation} For the circuit simulation, the wrapper upsamples the signal to 200 GSa/s (to emulate an analog waveform) and modulates it to the \SI{4.77}{\mega\hertz} IF for LTspice processing. The result then passes through a modeled RF receiver chain consisting of a 4th-order Butterworth bandpass filter and down-conversion to baseband. Finally, the wrapper decimates the signal to \SI{8}{\mega\hertz} to emulate the analog-to-digial converter (ADC) stage. The output is then processed by the MATLAB receiver. 
\paragraph{Prototype} For the PoC implementation, the wrapper feeds baseband samples into a Rohde \& Schwarz SMW200A Vector Signal Generator, which generates the analog passband signal at a \SI{5.379}{\mega\hertz} IF\footnote{The physical circuit transfer function differs slightly from the simulation, thus changing the IF.}. The signal is routed via an SMA cable to the NGD prototype (see \autoref{fig:poc}), and the resulting output is captured by a Rohde \& Schwarz RTP oscilloscope. The waveform is then returned to the wrapper for down-conversion, low-pass filtering, and decimation to \SI{8}{\mega\hertz}.

\paragraph{The Bluetooth Receiver} 
\revlabel{1/5}\blueon
For each received packet, our MATLAB receiver produces a ToA estimate and NADM metrics in four stages:
\begin{enumerate}
    \item \emph{Cross-correlation:} The receiver constructs the correlation template and computes the cross-correlation using \autoref{eq:diff_xcorr}.
    \item \emph{Bit correctness:} The receiver validates the decoded CS SYNC (Access Address and optional sequence), aborting if any bit errors are present.
    \item \emph{ToA estimation:} The receiver computes the FTE depending on the optimal sequence used.
    \item \emph{NADM metrics:} Finally, the receiver computes the detection metrics.
\end{enumerate}
\blueoff

To isolate the ToA advance caused by the attack, we calculate the ToA difference between each processed packet and a ground-truth. This reference signal traverses the same simulation, including frequency conversion and the receiver-side RF chain, but bypasses the NGD circuit. This ensures that any ToA difference detected by the receiver is the result of the attack.

\subsection{Experiments}\label{sec:experiments}
We conducted four experiments to (i) analyze the circuit's time- and frequency-domain behavior, (ii) estimate the ToA advance across various Bluetooth CS configurations, (iii) assess the attack's impact on detection metrics, and (iv) validate practical feasibility using a hardware PoC.

\subsubsection{Experiment 1: Testing the NGD Circuit (Simulation)}\label{sec:experiment_1}
In this first experiment, we evaluate the time- and frequency-domain characteristics of our NGD circuit.

\paragraph{Parameters} We pass through the circuit a single noise-less CS SYNC packet carrying a 128-bit Random Sequence on the LE 1M PHY.

\paragraph{Results:} The frequency response of the NGD circuit is illustrated in \autoref{fig:ngd-response}.
At the \SI{4.77}{\mega\hertz} IF, the circuit achieves a group delay of \SI{-62}{\nano\second}. The NGD band is sufficiently wide for a \SI{1}{\mega\hertz} Bluetooth CS channel and the NGD is nearly constant, ensuring minimal distortion. Due to the active filter design, the attenuation is limited to \SI{10}{\decibel}. Furthermore, the integrated bandpass filter suppresses out-of-band frequencies, with amplitude peaks below \SI{5}{\decibel} and \SI{0}{\decibel}, respectively.

The comparison of the cross-correlation peaks of input and output (see \autoref{fig:hw_sim_corr}) shows an advance of \SI{-61.35}{\nano\second}, matching our expectations. Finally, \autoref{fig:hw_sim_time} shows the first symbols of the CS SYNC packet and demonstrates that the distortion is negligible. The signal advancement is most clearly visible at the troughs.

\begin{figure}
    \centering
    \includegraphics[width=\linewidth]{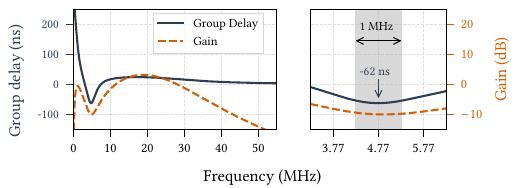}
    \caption{Frequency response of the simulated NGD circuit.}
    \label{fig:ngd-response}
\end{figure}

\begin{figure}
    \centering
    \includegraphics[width=\linewidth]{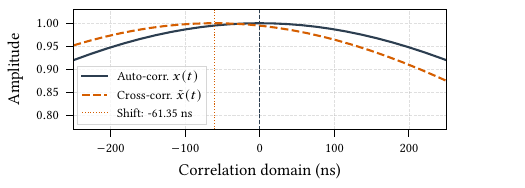}
    \caption{Correlation of the NGD output $\tilde{x}(t)$ with the template $x(t)$, peaking -61.35 ns earlier than the autocorrelation.}%
    \label{fig:hw_sim_corr}
    \end{figure}
    
\begin{figure}
    \centering
        \includegraphics[width=\linewidth]{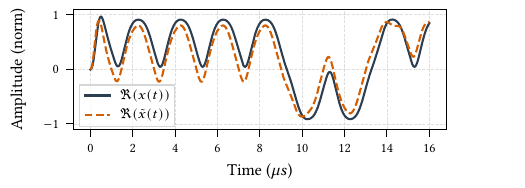}
        \caption{Input $x(t)$ and the output $\tilde{x}(t)$ of the NGD stage. Peaks and troughs of Re($\tilde{x}(t)$) appear slightly left-shifted.} 
    \label{fig:hw_sim_time}
\end{figure}

\subsubsection{Experiment 2: PHY Comparison (Simulation)}\label{sec:experiment_2}
\paragraph{Parameters} We generated 100 CS SYNC packets for each of the following configurations:
\begin{enumerate}
    \item LE 1M PHY, 128-bit Random Sequence
    \item LE 1M PHY, 96-bit Sounding Sequence
    \item LE 2M PHY, 128-bit Random Sequence
    \item LE 1M PHY, 128-bit Random Sequence, input SNR=\SI{10}{\decibel} 
\end{enumerate}

The first three configurations involve noise-free signals. In the fourth, we increased the sampling rate from 8 MSa/s to 80 MSa/s and injected per-sample AWGN to simulate high-frequency noise. This configuration models a realistic scenario where the attacker receives a noisy signal.

\paragraph{Results} \autoref{fig:distance_reduction} illustrates the ToA advancement of the attacked packets relative to the ground truth. The results are in line with our expectations (compare \autoref{fig:ngd-response}). Packets using the LE2M PHY experience a slightly shorter advancement ($\mu=\SI{-16.96}{\meter}, \sigma=\SI{4.25}{\centi\meter}$) than their \SI{1}{\mega\hertz} counterparts ($\mu=\SI{-18.35}{\meter}, \sigma=\SI{2.83}{\centi\meter}$), which we attribute to the bandwidth difference; more spectral energy falls on the rising edges of the NGD trough, increasing distortion and reducing the effective average NGD. 

Sounding Sequence packets are similarly affected, with a slightly higher mean ($\mu=\SI{-19.09}{\meter}$). The variance is lower ($\sigma=\SI{0.09}{\centi\meter}$), since Sounding Sequences are mostly deterministic, where as Random Sequences are pseudo-random. 
Finally, the last configuration demonstrates that the attack works under noisy conditions as well ($\mu=\SI{-18.36}{\meter}, \sigma=\SI{6.06}{\centi\meter}$). This is an advantage over ED/LC, which typically relies on a high SNR for the early-detection stage.  %

\begin{figure}
  \centering
    \includegraphics[width=\linewidth]{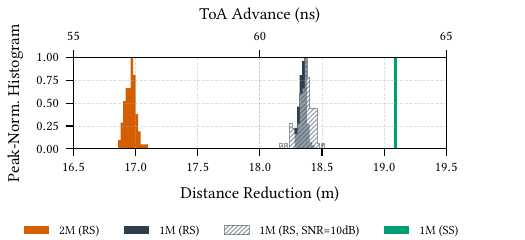}
    \caption{ToA advance by the NGD filter: 2 MHz ($\mathrm{2M}$) signals are advanced less than their 1 MHz counterparts.} %

    \label{fig:distance_reduction}
\end{figure}

\begin{figure*}
  \centering
   \begin{subfigure}[b]{\linewidth}
    \includegraphics[width=\linewidth]{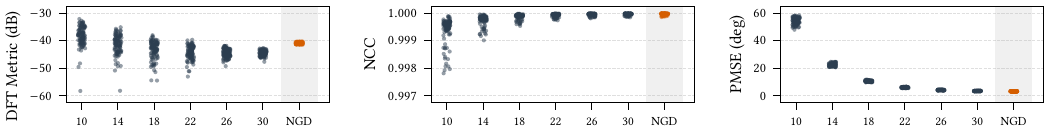}
    \caption{LE1M PHY, 128-bit Random Sequence ($n=100$)}
        \label{fig:experiments_details_a}
    \end{subfigure}
   \begin{subfigure}[b]{\linewidth}
    \includegraphics[width=\linewidth]{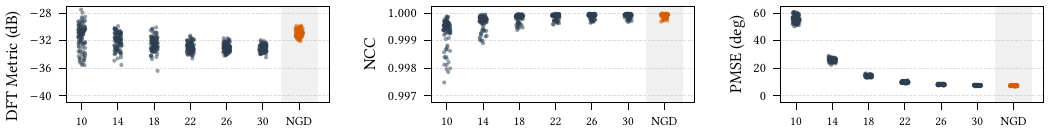}
    \caption{LE2M PHY, 128-bit Random Sequence ($n=100$)}
        \label{fig:experiments_details_b}
    \end{subfigure}
      \begin{subfigure}[b]{\linewidth}
    \includegraphics[width=\linewidth]{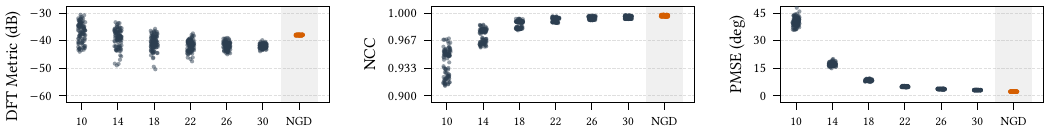}
    \caption{LE1M PHY, 96-bit Sounding Sequence ($n=100$)}
        \label{fig:experiments_details_c}
    \end{subfigure}
    \caption{Impact of NGD filtering on Bluetooth CS detection metrics for different packet configurations, compared with legitimate AWGN (SNR between 10 and 30 dB). The results show that detecting the attack with existing tools is difficult.}
    \label{fig:experiments_details}
\end{figure*}

\subsubsection{Experiment 3: Detection Metrics (Simulation)}
We implemented the three metrics mentioned in \autoref{sec:bluetooth_background} to evaluate their effectiveness in detecting the attack. In particular, we compare the attack's impact on these metrics against the effect caused by typical channel noise.

\paragraph{Parameters} We simulate 100 CS SYNC packets in the following three configurations:
\begin{enumerate}
    \item LE 1M, 128-bit Random Sequence
    \item LE 2M, 128-bit Random Sequence
    \item LE 1M, 96-bit Sounding Sequence
\end{enumerate}
For the ground-truth packets, we set the SNR to values in the range of \SI{10}{}-\SI{30}{\decibel}.

\paragraph{Results} \autoref{fig:experiments_details} shows that the metric distributions under noise and under attack are so similar that applying strict thresholds inevitably results in high false-rejection rates. This matches our expectation, since the distortion caused by the circuit is subtle (see \autoref{fig:hw_sim_time}). We conclude that the attack cannot be detected with existing metrics.

\subsubsection{Experiment 4: Physical PoC}\label{sec:experiment_4}
\revlabel{PoC}\blueon  In this experiment, we validate our physical proof-of-concept NGD implementation.

\paragraph{Parameters:} We generated 100 CS SYNC packets using the same configurations as in Experiment 3. The packets were converted to analog passband signals using a signal generator and fed into the circuit.
\paragraph{Results} The results are shown in \autoref{fig:distance_reduction_hw} in the appendix and confirm that the concept of a low-cost NGD circuit is practically feasible, with considerable reductions generally in line with our expectations. The PoC circuit consists of slightly different components than the simulation (e.g., resistors models and tolerances) and is affected by parasitic effects (e.g., parasitic capacitances) specific to the implementation, which explains slightly worse performance on the \SI{2}{\mega\hertz} signals. In addition, the PoC circuit does not include the bandpass filter stage, which explains the slightly larger distance reductions (around \SI{20}{\meter}) for 1 MHz packets compared to \autoref{fig:distance_reduction}.

\blueoff
\section{Discussion}

\paragraph{Practical constraints and impairments} The above results demonstrate the efficacy of NGD filtering. Although our evaluation utilizes a MATLAB-based receiver and simulated RF chains, the similarity between the filter's input and output waveforms suggests that the attack is feasible against commercial transceivers as well. Common hardware impairments, such as clock frequency offsets, sampling frequency offsets (SFO), and frontend non-linearities are unlikely to mitigate the attack. Furthermore, dynamic channel conditions do not counteract the deterministic NGD introduced by the circuit. Finally, Frequency Hopping Spread Spectrum (FHSS) presents a practical challenge, but it is a functional barrier and not a security primitive that offers long-term protection.

\paragraph{Attack comparison}
The masking attack is relatively straightforward to implement but generally ineffective against constant-envelope or multi-carrier modulation schemes, such as FSK or OFDM (although non-idealities in receivers introduce vulnerabilities, as discussed in \autoref{sec:discussion_bt}). Furthermore, the attack must be tailored to a specific ranging system, as it depends on specific parameters such as symbol duration, pulse shapes, or packet lengths. 
By contrast, the NGD attack requires a more sophisticated circuit but, once implemented, it is protocol-agnostic and universally applicable to any ToF ranging system operating within its design bandwidth.

\paragraph{Attacks published by Bluetooth}\label{sec:discussion_bt}
In November 2025, the Bluetooth Special Interest Group (SIG) disclosed "amplification-based attacks" against Bluetooth CS~\cite{bluetooth2025core62featureoverview}, which rely on symbol-periodic amplification to bias ToA estimates. According to the disclosure, this attack exploits "amplitude-to-phase conversion in the receiver", suggesting that amplification can lead to gain compression in a receiver's front-end, causing a phase shift that impacts the ToA estimate. 

In terms of signal alterations, these attacks resemble the temporal masking presented in this paper, mainly differing in the specific masks applied. Functionally, however, the attacks are fundamentally different: temporal masking targets the cross-correlation, whereas amplification-based attacks exploit hardware non-linearities.

To mitigate the attack, the Bluetooth SIG introduced the "DFT metric," which leverages the constant envelope of GFSK signals to effectively detect periodic amplitude anomalies.

During our research, we further found that amplification attacks also affect the fractional timing estimate (FTE) using Sounding Sequences by breaking the conjugate phase symmetry in the frequency-domain. However, due to the efficacy of the DFT metric, we did not investigate this further.

\subsection{Countermeasures}
\paragraph{Temporal masking} The DFT metric compares a signal's energy at the symbol frequency relative to the DC component. Since GFSK signals have a constant complex envelope, periodic amplitude deviations introduced by temporal masking are easily detectable.

However, for signals with inherent amplitude variations, due to pulse shaping or a non-constant envelope, the metric yields high baseline values by default and is likely ineffective for attack detection.

A possible direction for future work is to measure the accumulation rate of the correlation's inner product. Because the attack distorts the symbol's shape, energy aggregation happens disproportionally early in the symbol. However, the robustness of such a metric under noise is unclear, as it potentially suffers from false positives under noise.

\paragraph{NGD filtering} Our results indicate that a carefully engineered NGD circuit introduces minimal distortion (see \autoref{fig:hw_sim_time}), which is difficult to detect with existing countermeasures (see \autoref{fig:experiments_details}). Fundamentally, an NGD filter exploits the predictable nature of narrowband signals, and breaking the predictability without increasing the bandwidth is hard. For example, sharp symbol transitions or frequency stitching (i.e., combining neighboring channels into a single wider channel) would mitigate the issue, but degrade the system's spectral efficiency.

\paragraph{Phase-based Ranging (PBR)} PBR estimates the distance between devices 
based on the phase shift of continuous-wave tones across multiple frequencies, yielding high-precision measurements. However, combining ToF ranging with PBR does not reliably secure Bluetooth CS against the attacks proposed in this work. First, PBR has been independently shown to be vulnerable to distance-decreasing attacks~\cite{DBLP:conf/ches/OlafsdottirRC17,staat2022analog}. Second, its reliance on narrowband tones renders PBR also susceptible to the NGD attack.

\section{Related Work}
Clulow et al.~\cite{DBLP:conf/esas/ClulowHKM06} first identified the security risks of inferring transmitted bits from partial observations in secure ranging. This insight laid the groundwork for ED/LC attacks across diverse modulation schemes, such as ISO1443 Radio Frequency Identifier (RFID)~\cite{DBLP:conf/wisec/HanckeK08}, Chirp Spread Spectrum (CSS)~\cite{DBLP:conf/wisec/RanganathanDFC12}, Impulse Radio Ultra-Wideband (IR-UWB)~\cite{DBLP:conf/wisec/FluryPPHB10}, multi-carrier ranging~\cite{leu2021security}, and direct-sequence spread spectrum signals as used in GNSS~\cite{zhang2019effects}. Consequently, the IEEE 802.11az task group (Next Generation Positioning) and the Bluetooth SIG analyzed these threats and incorporated countermeasures into their respective standards~\cite{Batra2020_IEEE80211az, BluetoothSpec}.

Recently, a lot work in the field of secure ranging was focused on UWB. Singh et al. showed that ToA estimation in non-line-of-sight scenarios is is potentially vulnerable against adversarial noise injection~\cite{singh2021security}, an idea initially proposed by Poturalski et al.~\cite{poturalski2010cicada}. Leu et al. demonstrated the attack practically~\cite{DBLP:conf/uss/LeuCHRAHCC22}, Joo et al. proposed detection mechanisms~\cite{joo2024enhancing}, and Luo et al. eventually demonstrated how to secure such systems~\cite{luo2024secure}. Further work analyzed the risk of clock drift compensation~\cite{DBLP:conf/uss/AnlikerCC23}, attacks on time-difference-of-arrival ranging~\cite{stocker2020towards}, or how to test chip security in practice~\cite{aad2026fast}.

Phase-based ranging (PBR) has been analyzed by Ólafsdottir et al., who pointed out fundamental security issues~\cite{DBLP:conf/ches/OlafsdottirRC17}, and Staat et al.~\cite{staat2022analog}, who demonstrated attacks against Bluetooth PBR. Although a lot of research on Bluetooth security exists (e.g., \cite{antonioli2022blurtooth, antonioli2019low, lonzetta2018security}), this paper is, to the best of our knowledge, the first to analyze the PHY layer of RTT ranging in Bluetooth CS.

NGD filters have been designed and implemented in specialized literature~\cite{ngd_manitoba, Nako2024}, but their implications for secure ranging and distance bounding remain largely unexplored.

\section{Conclusion}
We analyzed the security of ToA estimation under narrowband conditions and presented two new distance-decreasing attacks against ToF ranging, namely temporal masking and NGD filtering. The attacks do not require real-time symbol detection but instead exploit the narrowband nature of waveforms. While temporal masking is straightforward to implement, it only applies to linear modulations, such as ASK, PSK, or QAM. By contrast, the NGD approach requires more sophisticated hardware but is largely modulation-agnostic. We simulated such a circuit and evaluated the NGD attack against Bluetooth CS, demonstrating that it can advance ToA estimates by up to \SI{60}{\nano\second}, corresponding to distance reductions of \SI{18}{\meter}. Finally, we showed that the attack is difficult to detect with existing metrics.

\begin{acks}
The authors would like to thank Harshad Sathaye for his invaluable technical support in constructing the NGD circuit. We also extend our gratitude to Joaquín Castañon and Jason Zibung, whose work inspired this project. This research has received funding from the Swiss National Science Foundation under NCCR Automation, grant agreement 51NF40\_180545 and from the Zurich Information Security Center (ZISC).
\end{acks}

\bibliographystyle{ACM-Reference-Format}
\bibliography{bibliography}

\appendix
\section{Derivation of $\tilde{P}'(0)$}\label{sec:appendix_derivation_derivative}
\revlabel{4}\blueon
We start by rewriting $\tilde{P}(\tau) = |\tilde{R}(\tau)|^2$ as
\begin{align}
   \tilde{P}(\tau) &= |R(\tau)|^2 + 2\Re\{R(\tau)\Delta R^*(\tau)\} + |\Delta R(\tau)|^2\notag \\
   &= \Re\{R(\tau) R^*(\tau) + 2R(\tau)\Delta R^*(\tau) + \Delta R(\tau)\Delta R^*(\tau)\}
\end{align}
and compute the derivative of the first term as

\begin{align}
\frac{d}{d\tau}\Big[\Re\{R(\tau) R^*(\tau)\}\Big]  &= \Re\left\{\frac{d}{d\tau}\Big[R(\tau)R^*(\tau)\Big]\right\} \notag \\
&\overset{(a)}{=} \Re\left\{R'(\tau)R^*(\tau) + R(\tau)(R^*(\tau))'\right\} \notag \\
&\overset{(b)}{=} \Re\left\{R'(\tau)R^*(\tau) + R(\tau) R'^*(\tau)\right\} \notag \\
&\overset{(c)}{=} 2\Re\{R^*(\tau)R'(\tau)\}
\end{align}
where in (a) we apply the product rule, in (b) we use $(R^*)' = (R')^*$, and in (c) the fact that the two terms are complex conjugates. Proceeding analogously for the other terms, we obtain

\begin{align}
    \tilde{P}' &= 2\Re\left\{R^*R' + R'^*\Delta R + R^*\Delta R' +\Delta R^* \Delta R' \right\}
\end{align}
where we omitted $\tau$ for brevity. To simplify this expression, we evaluate it at $\tau = 0$. The steps are explained in detail below.

\begin{align}
    \tilde{P}'(0) &= 2\Re\left\{R^*(0)R'(0) + R'^*(0)\Delta R(0) + R^*(0)\Delta R'(0) +\Delta R^*(0) \Delta R'(0) \right\} \notag \\
    &\overset{(a)}{=}2\Re\left\{R^*(0)\Delta R'(0) +\Delta R^*(0) \Delta R'(0) \right\} \notag \\
    &\overset{(b)}{=}2\Re\left\{\langle x, x\rangle\cdot\Delta R'(0) + \langle x, \Delta x\rangle \cdot \Delta R'(0) \right\} \notag \\
    &\overset{(c)}{=}2\Re\left\{\langle x, x\rangle\cdot (-\langle \Delta x, x'\rangle) + \langle x, \Delta x\rangle \cdot (-\langle \Delta x, x'\rangle)\right\} \notag \\
    &\overset{(d)}{=}-2\Re\left\{\langle \Delta x, x'\rangle \cdot (\langle x, x\rangle\ + \langle x, \Delta x\rangle) \right\} \notag \\
    &\overset{(e)}{=}-2\Re\big\{\langle \Delta x, x'\rangle\cdot \langle\tilde{x}, x\rangle^*\big\} \notag \\
\end{align}

In (a), we used $\Re\{R'(0)\} = \Re\{R'^*(0)\} = 0$, since the derivative of the autocorrelation $R$ at the peak $\tau = 0$ is imaginary. This cancels the first term and, because we assume $\Delta R(0) \in \mathbb{R}$, the second.

In (b), we used the definition of $R(\tau)$ and $\Delta R(\tau)$ according to \autoref{eq:R_tilde} and \autoref{eq:R_linearity}. For $\tau=0$, the respective signals are aligned in $\tau$, justifying the inner product notation. Furthermore, we used the identity $\langle \Delta x, x\rangle^* = \langle x, \Delta x \rangle$.

In (c) we substituted the derivative of $\Delta R'(\tau)$, which is defined as:

\begin{align}
    \Delta R'(\tau) &= \frac{d}{d\tau} \left[\int_{-\infty}^{\infty} \Delta x(t) x^*(t-\tau) \,dt\right] \notag \\
    &= -\int_{-\infty}^{\infty} \Delta x(t) (x')^*(t-\tau)\, dt.\notag \\
\end{align}
Evaluated at $\tau = 0$, this results in $\Delta R'(0) = -\langle \Delta x, x' \rangle$

In (d), we performed simple algebraic manipulations and in (e), we used $\tilde{x}(t) = x(t) + \Delta x(t)$.
\blueoff

\subsection{Masking Attack against FSK}
\subsubsection{Real Mask}\label{sec:fsk_innerp}
Let $x(t) = e^{j\phi(t)}$ and $\tilde{x}(t) = m(t)e^{j\phi(t)}$. If we consider $m(t): \mathbb{R}\rightarrow\mathbb{R}$, the derivative $x'(t)$ is defined as

\begin{equation}
    x'(t) = j\phi'(t)e^{j\phi(t)},
\end{equation}
and the inner product $\langle\tilde{x},x'\rangle$ as:
\begin{align}
\langle \tilde{x}, x' \rangle &= \infint \tilde{x}(t) \left( x'(t) \right)^* dt \notag \\
&= \infint \left( m(t)e^{j\phi(t)} \right) \left( -j\phi'(t)e^{-j\phi(t)} \right) dt \notag \\
&\overset{(a)}{=} -j \cdot \underbrace{\infint m(t)\phi'(t) dt}_{\in \mathbb{R}} \\
&\Rightarrow -2\Re\Big\{\langle \tilde{x}, x' \rangle\Big\} = 0
\end{align}
In (a), the complex-conjugate phase rotations cancel. Since the integral evaluates to a real number, the final product is strictly imaginary.

\subsubsection{Complex Mask}\label{sec:fsk_tildex}
\begin{lemma}
    Let $x(t) = e^{j\phi(t)}$. There is no $\tilde{x}(t) = m(t)x(t)$ with $m: \mathbb{R} \rightarrow \mathbb{C}$ such that

    \begin{equation}
        \Re\big\{\langle \tilde{x},x'\rangle\big\} > 0, \forall x
    \end{equation}

    for all $\phi(t): \mathbb{R} \rightarrow\mathbb{R}$. Consequently, no universal mask exists that can successfully advance the ToA for all possible FSK symbol sequences.
    
\end{lemma}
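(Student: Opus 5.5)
Fix an arbitrary mask $m:\mathbb{R}\to\mathbb{C}$ and show it fails for some phase trajectory. As in \autoref{sec:fsk_innerp}, write $x'(t)=j\phi'(t)e^{j\phi(t)}$. Then
\begin{equation*}
\langle \tilde{x}, x'\rangle = \infint m(t)e^{j\phi(t)}\bigl(-j\phi'(t)\bigr)e^{-j\phi(t)}\,dt = -j\infint m(t)\phi'(t)\,dt .
\end{equation*}
The phase rotations cancel exactly as in the real-mask case, but the integral is now complex. Writing $m=m_R+jm_I$ gives
\begin{equation*}
\Re\{\langle\tilde{x},x'\rangle\}=\infint m_I(t)\,\phi'(t)\,dt .
\end{equation*}
This reduces the statement to a property of the linear functional $L(\omega)=\infint m_I(t)\omega(t)\,dt$ acting on the instantaneous frequency $\omega=\phi'$. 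The only mask-dependent quantity is its imaginary part $m_I$, and the quantity to be made positive is linear in $\omega$.

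The key step is a sign-flip argument. The map $\phi\mapsto-\phi$ sends an admissible phase trajectory to another admissible one, namely the FSK sequence with all symbols inverted. It also sends $\omega$ to $-\omega$, and linearity gives $L(-\omega)=-L(\omega)$. So for any mask, if $\Re\{\langle\tilde{x},x'\rangle\}>0$ for some $\phi$, the same quantity is $<0$ for $-\phi$. If $L(\omega)=0$, that $\phi$ already violates the strict inequality. Hence no mask can satisfy the condition for all $\phi$.

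To connect this to the ``universal mask'' claim, I would invoke \autoref{eq:delta_x_req}. There, $\Re\{\langle\tilde{x},x'\rangle\}>0$ is the condition for advancement, so its failure means the peak is not advanced for that symbol sequence. Combined with \autoref{eq:delta_tau_approx}, a negative value means the peak is delayed. For the remark about longer sequences, I would note that with equiprobable symbols, $\omega(t)$ is symmetric in distribution under $\omega\mapsto-\omega$. Therefore $L(\omega)$ has zero mean, and a CLT over many symbols yields the normally distributed, zero-mean shifts mentioned in the main text.

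The main obstacle is one of rigor rather than computation: the inner product must be well defined. I would handle this by assuming, as in the Assumptions paragraph of \autoref{sec:analysis}, that all signals have finite support and are smooth. Concretely, I would take $\phi$ smooth and constant outside a compact window, with $m$ locally integrable, so that the integrals exist and $-\phi$ is again admissible. I would also make explicit that the phase-coherence assumption behind \autoref{eq:delta_x_req} is kept, which is why $\langle\tilde{x},x\rangle$ is treated as a positive scalar. Finally, I would note that the condition ``$\forall x$'' in the statement is to be read as ranging over all $\phi$.
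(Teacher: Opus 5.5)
Your proof is correct and follows the paper's argument essentially verbatim: the paper also reduces $\Re\{\langle\tilde{x},x'\rangle\}$ to $\infint m_I(t)\phi'(t)\,dt$ and then uses the bit-flipped trajectory $x_2(t)=e^{-j\phi(t)}$ to obtain the opposite sign. Your explicit handling of the case $L(\omega)=0$ and of integrability (compactly supported $\phi'$, locally integrable $m$) tightens the paper's contradiction argument slightly but does not change the route.
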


\paragraph{Proof:} For $m(t): \mathbb{R} \rightarrow\mathbb{R}$, the result has been shown in \autoref{sec:fsk_innerp}. For $m(t): \mathbb{R} \rightarrow\mathbb{C}$, the inner product $\langle \tilde{x}, x' \rangle$ yields:

\begin{align}
 \langle \tilde{x}, x' \rangle &= \infint \left( m(t)e^{j\phi(t)} \right) \left( -j\phi'(t)e^{-j\phi(t)} \right)\,  dt \notag \\
&= \infint -j \left( m_R(t) + jm_I(t) \right) \phi'(t)\,  dt  \notag\\
&= \infint \big(m_I(t) -jm_R(t)\big) \phi'(t)\, dt \notag\\
\Rightarrow \Re\big\{\langle \tilde{x}, x' \rangle\big\} &= \infint m_I(t)\phi'(t) dt
\end{align}

Assuming $m_I$ exists such that $\Re\big\{\langle \tilde{x}, x' \rangle\big\} > 0$ then $x_2(t) = e^{-j\phi(t)}$ (effectively flipping the bits) yields $\Re\big\{\langle \tilde{x}, x_{2}' \rangle\big\} < 0$, thus proving the lemma.

\section{Results of the Prototype}
The results of the prototype are illustrated in~\autoref{fig:distance_reduction_hw}:

\begin{figure}[h]
  \centering
    \includegraphics[width=\linewidth]{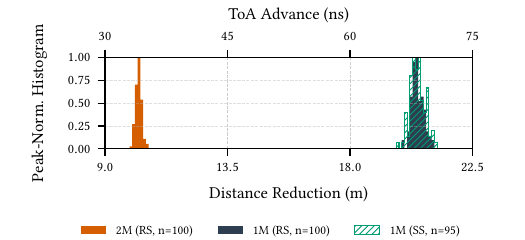}
    \caption{ToA advance measured after passing CS SYNC packets through a physical NGD circuit. Note that this circuit lacks the bandpass filter stage, which would incur a group delay penalty of $\approx\SI{10}{\nano\second}$.}
    \label{fig:distance_reduction_hw}
\end{figure}

\end{document}